\newtheorem{example}{Example}
\newtheorem{definition}{Definition}
\newtheorem{myrule}{Rule}
\newtheorem{theorem}{Theorem}
\newtheorem{lemma}{Lemma}
\newtheorem{proposition}{Proposition}
\newtheorem{corollary}{Corollary}
\newcommand{\cp}{\mathrel{\!\vbox{\offinterlineskip\ialign{%
    \hfil##\hfil\cr
    $\scriptstyle\circ$\cr
    \noalign{\kern-1.2ex}
    $\times$\cr
}}\!}}
\newcommand{\tcl}[1]{{#1^*}}
\newcommand{\ucl}[1]{{\mathcal{U}\!\left({#1}\right)}}
\newcommand{\ucmp}[1]{\overline{#1}^u}
\newcommand{\ucc}[1]{\ucmp{\tcl{#1}}}
\newcommand{\forced}{\mathop{~\Gamma~}}
\newcommand{\fcl}{\mathop{~\Gamma^*~}}
\tikzstyle{vertex}=[circle,fill=black!10,minimum size=20pt,inner sep=0pt]
\tikzstyle{edge} = [draw,thick,->]
\tikzstyle{tedge} = [draw,dashed,->]
\tikzstyle{uedge} = [draw,thick]
\tikzset{top/.style={baseline=(current bounding box.north)}}
\tikzset{mid/.style={baseline=(current bounding box.center)}}
\tikzset{gscale/.style={xscale=1.4,yscale=1.5}}
\newcommand{\tikzcaption}[1]{\node[below=6mm] at (current bounding box.base) {#1}}
\author{Henning Koehler}
\title{Modular decomposition of transitive graphs and transitively orienting their complements}
\affiliation{Massey University, New Zealand}
\keywords{modular decomposition, transitive orientation, permutation graph}
\begin{document}
\publicationdetails{VOL}{2017}{ISS}{NUM}{SUBM}
\maketitle

\begin{abstract}
~\\
The modular decomposition of a graph is a canonical representation of its modules.
Algorithms for computing the modular decomposition of directed and undirected graphs differ significantly, with the undirected case being simpler, and algorithms for directed graphs often work by reducing the problem to decomposing undirected graphs.
In this paper we show that transitive acyclic digraphs have the same strong modules as their undirected versions.
This simplifies reduction for transitive digraphs, requiring only the computation of strongly connected components.

Furthermore, we are interested in permutation graphs, where both the graph and its complement are transitively orientable.
Such graphs may be represented indirectly, as the transitive closure of a given graph.
For non-transitive graphs we present a linear-time algorithm which allows us to identify prime-free modules w.r.t their transitive closure, which speeds up both modular decomposition and transitive orientation for sparse graphs.

Finally, we show that any transitive orientation of a digraph's complement also transitively orients the complement of the digraph's transitive closure, allowing us to find such orientations in (near-)linear time.
\end{abstract}

\section{Introduction}

A \emph{module} $M$ of a graph $G=(V,E)$, which may be directed or undirected, is a non-empty subset of $V$ such that all vertices in $M$ have the same predecessors and successors in $V\setminus M$.
Decomposing a graph into modules can help with various graph problems, as it provides a way to reduce them to the same or similar (sometimes easier) problems on smaller graphs \cite{DBLP:journals/dm/McConnellS99}.
While graphs can possess exponentially many modules, all modules can be represented efficiently in linear space, and this representation, known as \emph{modular decomposition}, can be found in linear time \cite{DBLP:journals/dm/McConnellS99,DBLP:journals/dam/McConnellM05}.
As the algorithms needed to achieve linear running time are highly complex, later work has focused on achieving near-linear running times with simpler algorithms, for both undirected \cite{DBLP:journals/dmtcs/McConnellS00} and directed graphs \cite{DBLP:conf/swat/HabibMP04,DBLP:conf/icalp/TedderCHP08}.

As pointed out in \cite{DBLP:journals/dmtcs/McConnellS00}, the issue of \emph{transitive orientation} is closely related.
Here we seek to orient the edges of an undirected graph in such a way that the resulting digraph is transitive.
Not every graph can be oriented this way, and transitively orientable graphs are known as comparability graphs.
Comparability graphs whose complement is also transitively orientable are called permutation graphs.

We are particularly interested in digraphs whose complement can be transitively oriented.
It turns out that the transitive closure of such a graph is a permutation graph, and that a pair of total orders characterizing the transitive closure can be computed in near-linear time.

To begin, we introduce some helpful tools and terminology.
Throughout the paper we consider all graphs to be simple and directed, and represent undirected edges as pairs of directed edges.

\pagebreak[3]
\begin{definition}[inverse, undirected, complement, oriented, transitive]~\\[-1.5em]
\begin{enumerate}[(i)]
\setlength\itemsep{-2pt}
\item The \emph{inverse} of an edge $(a,b)$ is the edge $(a,b)^{-1}=(b,a)$.
The inverse of a graph $G=(V,E)$ is the graph $G^{-1}=(V,E^{-1})$, where $E^{-1}=\{(a,b)^{-1}\mid (a,b)\in E\}$.
\item We say that a graph $G=(V,E)$ is \emph{undirected} iff $E=E^{-1}$.
The \emph{undirected closure} of a graph $G=(V,E)$ is the graph $\mathcal{U}(G)=(V,E\cup E^{-1})$.
\item The \emph{complement} of $G=(V,E)$ is the graph $\overline{G}=(V,V\cp V\setminus E)$, where $V\cp V=\{(a,b)\mid a,b\in V, a\neq b\}$.
The \emph{undirected complement} $\ucmp{G}$ of $G$ is the complement of $\mathcal{U}(G)$.
\item A graph $G=(V,E)$ is \emph{oriented} iff $E\cap E^{-1}=\emptyset$.
A graph $G'=(V,E')$ is an \emph{orientation} of $G=(V,E)$ iff $E'$ is a maximal oriented subset of $E$.
\item A graph $G=(V,E)$ is \emph{transitive} iff for all edges $(a,b),(b,c)\in E$ with $a\neq c$ we also have $(a,c)\in E$.
The \emph{transitive closure} of a graph $G=(V,E)$ is $\tcl{G}=(V,\tcl{E})$, where $\tcl{E}$ is the minimal transitive superset of $E$.
\end{enumerate}
\end{definition}

We shall describe the representation of modules next.

\subsection{Modular Decomposition}

The \emph{trivial modules} of $G$ are $V$ and all singleton sets.
A graph is called \emph{prime} if it does not possess non-trivial modules.
Two modules \emph{overlap} if they intersect, but neither one contains the other.
A module is a \emph{strong module} if it does not overlap with any module, otherwise it is a \emph{weak module}.
Given a partition $\mathcal{P}$ of $V$ into modules, also referred to as a \emph{congruence partition}, the \emph{quotient} $G/\mathcal{P}$ is the graph obtained from $G$ by merging the nodes of each module in $\mathcal{P}$.

\begin{example}\label{E:modules}
Consider the transitive closure $\tcl{G}$ of the graph $G$ shown on the left below.
\begin{center}
\begin{tikzpicture}[gscale,top]
\foreach \pos/\name in {%
{(0,2)/A}, {(1,2)/B}, {(2,2)/C},
{(0,1)/D}, {(1,1)/E}, {(2,1)/F},
{(0,0)/G}, {(1,0)/H}}
    \node[vertex] (\name) at \pos {\name};
\foreach \source/\dest in {%
A/D, B/C, D/G, E/B, E/H, F/C, F/H, G/H}
    \path[edge] (\source) -- (\dest);
\path[edge] (C) to [out=150,in=30] (B);
\foreach \source/\dest in {%
A/H, D/H, E/C, F/B}
    \path[tedge] (\source) -- (\dest);
\path[tedge] (A) to [bend right=30] (G);
\draw [blue] plot [smooth cycle, tension=1.25] coordinates {(.75,2) (1.5,1.75) (2.25,2) (1.5,2.25)}; 
\draw [blue] plot [smooth cycle, tension=1.25] coordinates {(.75,1) (1.5,.75) (2.25,1) (1.5,1.25)}; 
\draw [blue] plot [smooth cycle, tension=1.25] coordinates {(0,.75) (.25,1.5) (0,2.25) (-.25,1.5)}; 
\draw [blue] plot [smooth cycle, tension=1.25] coordinates {(0,-.25) (.25,.5) (0,1.25) (-.25,0.5)}; 
\draw [blue] plot [smooth cycle, tension=1.25] coordinates {(0,-.4) (.4,1) (0,2.4) (-.4,1)}; 
\end{tikzpicture}
\qquad\raisebox{-2cm}{$\xRightarrow{\tcl{G}/\mathcal{P}}$}\qquad
\begin{tikzpicture}[gscale,top]
\foreach \pos/\name in {%
{(1,2)/BC},
{(0,1)/ADG}, {(1,1)/E}, {(2,1)/F},
{(1,0)/H}}
    \node[vertex] (\name) at \pos {\makebox[0pt][c]{\name}};
\foreach \source/\dest in {%
ADG/H, E/BC, E/H, F/BC, F/H}
    \path[edge] (\source) -- (\dest);
\end{tikzpicture}
\end{center}
The non-trivial modules of $\tcl{G}$ are AD, DG, ADG, BC and EF.
Its non-trivial strong modules are ADG, BC and EF.
The set $\mathcal{P}=\{ADG, BC, E, F, H\}$ is one of many congruence partitions.
The quotient graph $\tcl{G}/\mathcal{P}$ is shown on the right.
Unlike its transitive closure, $G$ is prime.
\qed
\end{example}

The \emph{modular decomposition} of $G$ is an $O(n)$-space representation of all modules of $G$.
It was first described in \cite{gallai67}, for a recent survey see \cite{DBLP:journals/csr/HabibP10}.
The decomposition forms a tree, whose nodes are the strong modules of $G$, ordered by the subset relationship.
In particular, the root of the modular decomposition tree is $V$, and its leaves are the singleton sets.

Furthermore, we can associate the \emph{child quotient} graph $(G|N)/\text{children}(N)$ with each non-leaf node $N$, where $G|N$ denotes the subgraph of $G$ on $N\subseteq V$, and children($N$) the maximal strong modules properly contained in $N$.
Each non-leaf node is classified based on the type of its child quotient graph:
\begin{itemize}
\setlength\itemsep{-5pt}
\item prime: the child quotient graph is a prime graph
\item series: the child quotient graph is a clique
\item parallel: the child quotient graph is independent (contains no edges)
\item ordered: the child quotient graphs describes a total order (directed graph only)
\end{itemize}
No other cases exist, and the types, together with an ordering of the vertices in the case of an ordered node, fully describe all modules of the child quotient graph, which in turn are modules of $G$.

\begin{example}
Consider again the graph from Example~\ref{E:modules}.
The modular decomposition tree of its transitive closure is shown on the right.
\vspace{-1em}
\begin{center}
\begin{tikzpicture}[gscale,top]
\foreach \pos/\name in {%
{(0,2)/A}, {(1,2)/B}, {(2,2)/C},
{(0,1)/D}, {(1,1)/E}, {(2,1)/F},
{(0,0)/G}, {(1,0)/H}}
    \node[vertex] (\name) at \pos {\name};
\foreach \source/\dest in {%
A/D, B/C, D/G, E/B, E/H, F/C, F/H, G/H}
    \path[edge] (\source) -- (\dest);
\path[edge] (C) to [out=150,in=30] (B);
\foreach \source/\dest in {%
A/H, D/H, E/C, F/B}
    \path[tedge] (\source) -- (\dest);
\path[tedge] (A) to [bend right=30] (G);
\draw [blue] plot [smooth cycle, tension=1.25] coordinates {(.75,2) (1.5,1.75) (2.25,2) (1.5,2.25)}; 
\draw [blue] plot [smooth cycle, tension=1.25] coordinates {(.75,1) (1.5,.75) (2.25,1) (1.5,1.25)}; 
\draw [blue] plot [smooth cycle, tension=1.25] coordinates {(0,.75) (.25,1.5) (0,2.25) (-.25,1.5)}; 
\draw [blue] plot [smooth cycle, tension=1.25] coordinates {(0,-.25) (.25,.5) (0,1.25) (-.25,0.5)}; 
\draw [blue] plot [smooth cycle, tension=1.25] coordinates {(0,-.4) (.4,1) (0,2.4) (-.4,1)}; 
\end{tikzpicture}
\qquad\raisebox{-2cm}{$\xRightarrow[\text{decomposition}]{\text{modular}}$}
\begin{tikzpicture}[xscale=1.2,top]
\foreach \pos/\name/\label in {%
{(2,3)/X/prime},
{(1,1)/ADG/ordered},
{(3,1)/BC/series},
{(4,2)/EF/parallel},
{(0,0)/A/A}, {(1,0)/D/D}, {(2,0)/G/G}, {(2,1)/H/H}, {(3,0)/B/B}, {(4,0)/C/C}, {(4,1)/E/E}, {(5,1)/F/F}}
    \node[vertex] (\name) at \pos {\makebox[0pt][c]{\label}};
\tikzstyle{edge} = [draw,thick]
\foreach \source/\dest in {%
X/ADG, X/H, X/BC, X/EF, ADG/A, ADG/D, ADG/G, BC/B, BC/C, EF/E, EF/F}
    \path[edge] (\source) -- (\dest);
\end{tikzpicture}
\end{center}
Together with a total ordering $A<D<G$ of the children of each ordered node, this describes all modules of $\tcl{G}$.
For parallel and series nodes all subsets of children are modules, and for ordered nodes all sub-intervals of children are, in our case AD and DG (but not AG).
\qed
\end{example}

The modular decomposition of a graph is a useful tool for many graph problems.
In particular, one may observe that the child quotient graphs of ordered, series and parallel nodes are permutation graphs, and that transitive orientations for them and their complements are trivial to find.

\subsection{Transitive Orientation}\label{S:TO}

We call a graph \emph{transitively orientable} iff it possesses a transitive orientation.
Transitive orientations of graphs are typically represented by providing a total ordering of vertices, which applies only to edges in the graph to be oriented.
As this representation only requires space (near)-linear in the number of vertices, it can be computed efficiently even if the graph is large - e.g. the complement of a given sparse graph.

Existence of a transitive orientation is particularly interesting for the complement of a transitive DAG $G$, as it characterizes $G$ as a permutation graph.
In this case $G$ has order dimension two (or less), i.e., can be characterized as the intersection of two linear orders \cite{pos41}.
The linear orders describing $G$ can then be found by merging $G$ with the transitive orientation $\mathcal{O}$ of its complement, and with its inverse $\mathcal{O}^{-1}$.

\begin{example}\label{E:transitive-orientation}
Consider the graph shown on the left.
A transitive orientation of the undirected complement of its transitive closure is given in the center.
Merging the two graphs results in two linear orderings of vertices, the intersection of which characterizes the original graph.
\newcommand{\vdef}{%
\foreach \pos/\name in {%
{(1,2)/A},
{(0,1)/B}, {(1,1)/C}, {(2,1)/D},
{(1,0)/E}, {(2,0)/F}}
    \node[vertex] (\name) at \pos {\name}}
\begin{center}
\hfill
\begin{tikzpicture}[gscale,top]
\vdef;
\foreach \source/\dest in {%
A/B, A/C, A/D, B/E, C/E, D/F}
    \path[edge] (\source) -- (\dest);
\tikzstyle{edge} = [draw,dashed,->]
\path[edge] (A) to [out=-120,in=120] (E);
\path[edge] (A) -- (F);
\tikzcaption{input DAG};
\end{tikzpicture}
\hfill\raisebox{-2cm}{$+$}\hfill
\begin{tikzpicture}[gscale,top]
\vdef;
\foreach \source/\dest in {%
B/C, C/D, C/F, E/D, E/F}
    \path[edge] (\source) -- (\dest);
\path[tedge] (B) -- (F);
\path[tedge] (B) to [bend left=30] (D);
\tikzcaption{TO of complement};
\end{tikzpicture}
\hfill\raisebox{-2cm}{$=$}\hfill
\begin{tikzpicture}[gscale,top]
\vdef;
\foreach \source/\dest in {%
A/B, B/C, C/E, E/D, D/F}
    \path[edge] (\source) -- (\dest);
\foreach \source/\dest in {%
A/C, A/D, A/F, B/E, B/F, C/D, C/F, E/F}
    \path[tedge] (\source) -- (\dest);
\path[tedge] (A) to [bend right=30] (E);
\path[tedge] (B) to [bend left=30] (D);
\tikzcaption{1$^{st}$ linearization};
\end{tikzpicture}
\hfill~
\end{center}
Merging $G$ with the inverse orientation of its complement provides the second linearization.
\begin{center}
\hfill
\begin{tikzpicture}[gscale,top]
\vdef;
\foreach \source/\dest in {%
A/B, A/C, A/D, B/E, C/E, D/F}
    \path[edge] (\source) -- (\dest);
\tikzstyle{edge} = [draw,dashed,->]
\path[edge] (A) to [out=-120,in=120] (E);
\path[edge] (A) -- (F);
\tikzcaption{input DAG};
\end{tikzpicture}
\hfill\raisebox{-2cm}{$+$}\hfill
\begin{tikzpicture}[gscale,top]
\vdef;
\foreach \source/\dest in {%
C/B, D/C, F/C, D/E, F/E}
    \path[edge] (\source) -- (\dest);
\path[tedge] (F) -- (B);
\path[tedge] (D) to [bend right=30] (B);
\tikzcaption{inverse TO};
\end{tikzpicture}
\hfill\raisebox{-2cm}{$=$}\hfill
\begin{tikzpicture}[gscale,top]
\vdef;
\foreach \source/\dest in {%
A/D, D/F, F/C, C/B, B/E}
    \path[edge] (\source) -- (\dest);
\foreach \source/\dest in {%
A/C, A/B, A/F, D/C, D/E, F/B, F/E, C/E}
    \path[tedge] (\source) -- (\dest);
\path[tedge] (A) to [bend right=30] (E);
\path[tedge] (D) to [bend right=30] (B);
\tikzcaption{2$^{nd}$ linearization};
\end{tikzpicture}
\hfill~
\end{center}
The partial ordering induced by $G$ can now be represented as ${\leq_{ABCEDF}} ~\cap~ {\leq_{ADFCBE}}$.
\qed
\end{example}

An efficient algorithm for combining the partial order described by $G^*$ with a transitive orientation of its complement, given as a linear ordering, can e.g. be found in \cite[Proof of Lemma~2]{DBLP:journals/jacm/EvenPL72}.

\section{Modular Decomposition of Transitive Graphs}

Most algorithms for decomposing directed graphs do so by reducing the problem to finding the modular decomposition of an undirected graph \cite{DBLP:conf/swat/HabibMP04,DBLP:conf/icalp/TedderCHP08}, or by adapting specific algorithms for undirected graphs \cite{DBLP:journals/dmtcs/DahlhausGM02}.
This suggests that although modular decomposition of directed and undirected graphs can both be performed in (near-) linear time, decomposing undirected graphs is technically simpler.

An obvious candidate for an undirected graph to which we might reduce the problem of modular decomposition of a digraph is its undirected closure.
Clearly, modules of the former must also be modules of the latter.
However, as the following example shows, their modular decompositions may differ greatly if the graph in question is not transitive.

\pagebreak[3]
\begin{example}
\newcommand{\vdef}{
\foreach \pos/\name in {%
{(0,2)/A},
{(0,1)/B}, {(1,1)/C},
{(0,0)/D}, {(1,0)/E}}
    \node[vertex] (\name) at \pos {\name}
}
Consider the directed graph $G$ shown below.
Its only non-trivial module is DE, leading to the decomposition tree on the right.
\begin{center}
\begin{tikzpicture}[gscale,mid]
\vdef;
\foreach \source/\dest in {%
A/B, A/C, B/C, B/D, B/E}
    \path[edge] (\source) -- (\dest);
\end{tikzpicture}
\qquad\raisebox{0cm}{$\xRightarrow[\text{decomposition}]{\text{modular}}$}
\begin{tikzpicture}[scale=1.2,mid]
\foreach \pos/\name/\label in {%
{(2,2)/X/prime},
{(3,1)/DE/parallel},
{(0,0)/A/A}, {(1,0)/B/B}, {(2,0)/C/C}, {(3,0)/D/D}, {(4,0)/E/E}}
    \node[vertex] (\name) at \pos {\makebox[0pt][c]{\label}};
\tikzstyle{edge} = [draw,thick]
\foreach \source/\dest in {%
X/A, X/B, X/C, X/DE, DE/D, DE/E}
    \path[uedge] (\source) -- (\dest);
\end{tikzpicture}
\end{center}
The undirected closure $\ucl{G}$ is shown below.
In addition to DE, the sets AC, ACD, ACE and ACDE are non-trivial modules of $\ucl{G}$.
Among these, only AC and ACDE are strong modules.
\begin{center}
\begin{tikzpicture}[gscale,mid]
\vdef;
\foreach \source/\dest in {%
A/B, A/C, B/C, B/D, B/E}
    \path[uedge] (\source) -- (\dest);
\end{tikzpicture}
\qquad\raisebox{0cm}{$\xRightarrow[\text{decomposition}]{\text{modular}}$}
\begin{tikzpicture}[scale=1.2,mid]
\foreach \pos/\name/\label in {%
{(3,3)/X/series},
{(2,2)/ACDE/parallel},
{(1,1)/AC/series},
{(0,0)/A/A}, {(1,0)/C/C}, {(2,0)/D/D}, {(3,0)/E/E}, {(4,0)/B/B}}
    \node[vertex] (\name) at \pos {\makebox[0pt][c]{\label}};
\tikzstyle{edge} = [draw,thick]
\foreach \source/\dest in {%
X/ACDE, X/B, ACDE/AC, ACDE/D, ACDE/E, AC/A, AC/C}
    \path[edge] (\source) -- (\dest);
\end{tikzpicture}
\end{center}
This shows that although modules of $G$ must also be modules of $\ucl{G}$, the two graphs may not share any non-trivial strong modules, leading to very different decomposition trees.
\qed
\end{example}

However, we seek to decompose \emph{transitive} graphs.
Here the relationship between the modular decompositions of $G$ and $\ucl{G}$ is much closer -- as we will see later (proof of Theorem~\ref{Th:transitive-strong-modules}, \emph{if} direction), strong modules of $\ucl{G}$ are strong modules of $G$.
However, $G$ may have strong modules which are only weak modules of $\ucl{G}$, and thus missing in the modular decomposition of $\ucl{G}$.

\begin{example}\label{Ex:transitive-cyclic}
\newcommand{\vdef}{
\foreach \pos/\name in {%
{(0.5,1)/A},
{(0,0)/B}, {(1,0)/C}}
    \node[vertex] (\name) at \pos {\name}
}
Consider the transitive graph $G$ and its undirected closure below:
\begin{center}
\hfill
\begin{tikzpicture}[gscale,mid]
\vdef;
\foreach \source/\dest in {%
A/B, A/C}
    \path[edge] (\source) -- (\dest);
\path[edge] (B) to [bend left=15] (C);
\path[edge] (C) to [bend left=15] (B);
\node at (-0.6,0.6) {$G$:};
\end{tikzpicture}
\hfill
\begin{tikzpicture}[gscale,mid]
\vdef;
\foreach \source/\dest in {%
A/B, A/C, B/C}
    \path[uedge] (\source) -- (\dest);
\node at (-0.6,0.6) {$\ucl{G}$:};
\end{tikzpicture}
\hfill~
\end{center}
~\\
The set $\{B,C\}$ is a strong module of $G$, but only a weak module of $\ucl{G}$.\qed
\end{example}

The problem in Example~\ref{Ex:transitive-cyclic} hails from the fact that ordered and series nodes become indistinguishable in the undirected closure.
We can avoid this by identifying and merging strongly connected components.
This leaves us with a transitive DAG, whose modular decomposition is free of series nodes.
For such graphs, the modular decompositions of $G$ and its undirected closure $\ucl{G}$ become isomorphic.
Before we show this, we briefly establish some terminology and properties.

\begin{definition}[child quotient graph]~\\[-1.5em]
\label{D:module-type}
\begin{enumerate}[(i)]
\setlength\itemsep{-2pt}
\item We denote the minimal strong module containing a module $M$ by $M^+$.
\item The \emph{child quotient graph} of a weak module $M$ is $(G|M)/\mathcal{P}_M$, where $\mathcal{P}_M$ is the partition of $M$ into the maximal strong modules contained in $M$.
\item We call a weak module $M$ \emph{series}, \emph{parallel} or \emph{ordered} if its child quotient graph is a clique, independent or totally ordered, respectively.
\end{enumerate}
\end{definition}

One may easily confirm the following:

\begin{proposition}\label{P:parent-module}
A weak module $M$ is series/parallel/ordered iff $M^+$ is series/parallel/ordered.
\end{proposition}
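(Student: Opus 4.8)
The plan is to read off everything from the tree structure of the modular decomposition. Since $M$ is a \emph{weak} module it overlaps some module, and the standard theory of modular decomposition then supplies the two facts that drive the whole argument: first, its minimal strong container $M^+$ cannot be prime, so $M^+$ is series, parallel, or ordered; and second, $M$ arises as a union of children of $M^+$, i.e. of the maximal strong modules properly contained in $M^+$. I would state these two facts at the outset, since a prime node admits no non-trivial module among unions of its children, so the mere existence of the weak module $M$ strictly inside $M^+$ already forces $M^+$ to be degenerate.

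The main technical step, and the one I expect to require the most care, is identifying the partition $\mathcal{P}_M$. I claim $\mathcal{P}_M=\{C\in\text{children}(M^+):C\subseteq M\}$. The inclusion $\supseteq$ is immediate, as $M$ is a union of children of $M^+$ and each such child is a strong module. For $\subseteq$ I take an arbitrary strong module $S\subseteq M$; then $S\subsetneq M^+$, so by the laminar (tree) structure of strong modules $S$ lies inside a unique child $C$ of $M^+$. Because $M$ is a union of children and $\emptyset\neq S\subseteq C\cap M$, the child $C$ must itself be contained in $M$. Hence every strong module inside $M$ sits within a child of $M^+$ that lies in $M$, which shows precisely that these children are the maximal strong modules contained in $M$.

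With $\mathcal{P}_M$ so described, the child quotient graph $(G|M)/\mathcal{P}_M$ is exactly the subgraph of $(G|M^+)/\text{children}(M^+)$ induced on the vertices corresponding to the children contained in $M$; adjacency between two such children is uniform (they are modules) and therefore agrees in both quotients. At this point I would invoke heredity: each of the three properties ``clique'', ``edgeless'', and ``transitive tournament'' (total order) is preserved under passing to induced subgraphs. Consequently, whichever degenerate type $M^+$ carries, $(G|M)/\mathcal{P}_M$ inherits it, which settles the ``if'' direction for all three types at once.

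For the converse I would use mutual exclusivity together with a count. Since $M$ is weak it is not strong, so $\mathcal{P}_M$ has at least two blocks; on two or more vertices a complete digraph, an edgeless digraph, and a transitive tournament are pairwise distinct. As $M^+$ is already known to be one of the three degenerate types, the type of $M$ determines that of $M^+$: if $M$ is series/parallel/ordered, then $M^+$ can be neither of the other two types and must match. The only pitfall to watch for is a child quotient with fewer than two vertices, where the types would collapse; but this is exactly the case excluded by the hypothesis that $M$ is a weak, and hence non-strong, module.
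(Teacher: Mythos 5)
Your proof is correct. There is, however, nothing in the paper to compare it against: the author states this proposition with only the remark ``One may easily confirm the following,'' and gives no proof, so your argument fills a gap rather than paralleling an existing derivation. Your route is the natural (and presumably intended) one: the two facts you start from---that a weak module $M$ forces its minimal strong container $M^+$ to be non-prime, and that $M$ is a union of children of $M^+$---are the standard consequences of strong modules forming a laminar family and of the quotient--module correspondence; your identification $\mathcal{P}_M=\{C\in\text{children}(M^+): C\subseteq M\}$ is exactly the step needed to realize the child quotient graph of $M$ as an induced subgraph of the child quotient graph of $M^+$; heredity of ``clique,'' ``edgeless,'' and ``transitive tournament'' then gives one direction, and pairwise distinctness of the three types on two or more vertices (with $\lvert\mathcal{P}_M\rvert\geq 2$ because $M$ is not strong) gives the converse. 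One dependency worth flagging: the quadrichotomy you invoke---that the child quotient of a strong module is always prime, series, parallel, or ordered---is itself only asserted in the paper (``No other cases exist''), so your proof inherits it as background; that is consistent with how the paper treats it, but a fully self-contained argument would need to cite Gallai-type decomposition theory for it.
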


\begin{proposition}\label{P:series-module}
If a module $M$ of $G=(V,E)$ can be partitioned into two non-empty subsets $M_1,M_2$ such that $M_1\times M_2\subseteq E$ and $M_2\times M_1\subseteq E$, then $M$ is series.
\end{proposition}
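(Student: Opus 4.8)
The plan is to prove that the child quotient graph $(G|M)/\mathcal{P}_M$ is a clique, which is exactly what it means for $M$ to be series. First I would record the easy structural consequences of the hypothesis. Since $M_1\times M_2\subseteq E$ and $M_2\times M_1\subseteq E$, every vertex of $M_1$ is joined to every vertex of $M_2$ by edges in both directions. Combined with the fact that $M$ is a module — so that all of $M$, and in particular all of $M_1$, behaves identically towards $V\setminus M$ — this shows that $M_1$ is itself a module of $G$; by symmetry so is $M_2$. Thus $\{M_1,M_2\}$ is a congruence partition of $M$.

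Next I would argue that $\mathcal{P}_M$ refines $\{M_1,M_2\}$, i.e. every maximal strong submodule $P\in\mathcal{P}_M$ lies wholly inside $M_1$ or wholly inside $M_2$. If instead $P$ met both, then, distinguishing the cases $M_1\subseteq P$ and $M_1\not\subseteq P$, the set $P$ would overlap one of the modules $M_1,M_2$, which is impossible for a strong module. Hence $\mathcal{P}_M=\mathcal{A}\sqcup\mathcal{B}$ with $\bigcup\mathcal{A}=M_1$ and $\bigcup\mathcal{B}=M_2$, both parts non-empty.

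Now consider the quotient $H=(G|M)/\mathcal{P}_M$. For $P\in\mathcal{A}$ and $Q\in\mathcal{B}$ we have $P\times Q\subseteq M_1\times M_2\subseteq E$ and $Q\times P\subseteq E$, so all edges of $H$ running between $\mathcal{A}$ and $\mathcal{B}$ are present in both directions; in particular $\mathcal{A}$ and $\mathcal{B}$ are each modules of $H$. It then remains to see that $H$ is a clique, and here I would invoke the classification of child quotients: $H$ is prime, a clique, edgeless, or a linear order. The bidirectional cross edges exclude the edgeless (parallel) case, and they exclude the linear-order (ordered) case, since a total order admits only one direction between any two vertices. They also exclude the prime case: if $H$ had three or more vertices then one of the modules $\mathcal{A},\mathcal{B}$ would be non-trivial, contradicting primeness, while if $H$ had only two vertices it would be the bidirectionally joined $K_2$, i.e. already a clique. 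Hence $H$ is a clique and $M$ is series; if $M$ is only a weak module one passes to $M^+$ via Proposition~\ref{P:parent-module}, or reads the conclusion directly from Definition~\ref{D:module-type}.

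The routine parts are the verification that $M_1,M_2$ are modules and the non-overlap argument. The real obstacle is that the hypothesis constrains only the edges \emph{between} $M_1$ and $M_2$, whereas a clique quotient also demands bidirectional edges \emph{within} each side; purely local reasoning does not deliver these, since a set such as $P\cup M_2$ need not be a module. I therefore expect the decisive step to be the appeal to the four-type classification of $H$ together with the observation that $\mathcal{A}$ and $\mathcal{B}$ are modules of $H$, which is what forces the within-side structure to be complete and rules out the prime case.
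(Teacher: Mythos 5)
The paper never proves this proposition: it is introduced with ``One may easily confirm the following'' and stated bare, so there is no authorial proof to compare yours against---only an implicit appeal to standard modular decomposition theory. Your argument is correct and is essentially the verification the paper leaves to the reader. The steps all check out: $M_1$ and $M_2$ are modules of $G$ (external vertices are handled by $M$ being a module, internal ones by the bidirectional cross edges); every $P\in\mathcal{P}_M$ lies inside $M_1$ or $M_2$, because $P$ is strong, hence cannot overlap the modules $M_1,M_2$, and $P$ is a \emph{proper} subset of $M$ (worth saying explicitly---for weak $M$ this is because $P$ is strong and $M$ is not); the blocks $\mathcal{A},\mathcal{B}$ are then modules of the quotient $H$ joined bidirectionally; and the four-type classification does the decisive work of forcing completeness \emph{within} each side, which, as you correctly diagnose, the hypothesis alone does not give. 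One caveat to flag: the paper states the prime/series/parallel/ordered classification only for nodes of the decomposition tree, i.e.\ for strong modules, whereas the proposition (and its later use on $M\cup X$ in Theorem~\ref{Th:transitive-strong-modules}) must cover weak $M$ as well. For weak $M$ you need the standard facts that $M$ is a union of children of $M^+$, that $M^+$ consequently cannot be prime, and that $\mathcal{P}_M$ consists exactly of those children---so $H$ is an induced subgraph of a clique, an independent set, or a total order, and your case analysis goes through (with the prime case vacuous). Your closing sentence gestures at this via Proposition~\ref{P:parent-module} and Definition~\ref{D:module-type} without spelling it out; given the paper's own level of rigor here, that is a matter of completeness rather than a genuine gap.
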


We are now ready to show our result.

\begin{theorem}\label{Th:transitive-strong-modules}
Let $G=(V,E)$ be a transitive acyclic digraph.
Then $M\subseteq V$ is a strong module of $G$ if and only if $M$ is a strong module of $\ucl{G}$.
\end{theorem}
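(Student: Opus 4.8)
The plan is to translate everything into the language of the strict partial order defined by $G$: write $a<b$ for $(a,b)\in E$, so that $\ucl{G}$ is exactly the comparability graph of $<$. An external vertex $x\notin M$ then either lies below all of $M$, above all of $M$, or incomparable to all of $M$ (the condition for $M$ to be a module of $G$), whereas for $\ucl{G}$ it need only be comparable to all of $M$ or to none. From this dichotomy one reads off at once that every module of $G$ is a module of $\ucl{G}$, and that (since $G$ is acyclic) $G$ has no series modules, as a clique child-quotient would require a directed $2$-cycle. The one structural fact driving both directions is this: if $M$ is a module of $\ucl{G}$ that is \emph{not} a module of $G$, then some external $y$ is comparable to all of $M$ but lies above part of $M$ and below the rest; writing $D=\{m\in M:m<y\}$ and $U=\{m\in M:y<m\}$, transitivity forces $D<U$, so $G|M$ is an ordinal sum of $D$ and $U$, and hence, by Proposition~\ref{P:series-module}, $M$ is a \emph{series} module of $\ucl{G}$.

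For the \emph{if} direction I would show that a strong module $M$ of $\ucl{G}$ is a strong module of $G$. That $M$ does not overlap any module of $G$ is free, since modules of $G$ are modules of $\ucl{G}$ and $M$ overlaps none of the latter; the content is that $M$ is itself a module of $G$. Arguing by induction on $|M|$, suppose $M$ were not a module of $G$. By the fact above $M$ carries the ordinal-sum splitting into $D$ and $U$ induced by some external $y$. Each child $C$ of $M$ in the decomposition of $\ucl{G}$ is a strong module of strictly smaller size, hence a module of $G$ by the induction hypothesis, so $C$ cannot be split by $y$ and must lie entirely in $D$ or entirely in $U$; thus $D$ and $U$ are unions of children. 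I would then use the splitting vertex $y$ to build a module of $\ucl{G}$ that properly overlaps $M$ (intuitively, the modular closure of $D$ together with $y$ and the vertices lying in the same ``gap'' between $D$ and $U$), contradicting strongness.

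For the \emph{only if} direction I would show that a strong module $M$ of $G$ is strong in $\ucl{G}$. It is a module of $\ucl{G}$ by the containment above, so suppose some module $N$ of $\ucl{G}$ overlaps it. Since $M$ is strong in $G$, $N$ cannot be a module of $G$, so by the structural fact $N$ carries an ordinal-sum splitting with some splitter $y\notin N$. Because $M$ is a module of $G$, every vertex outside $M$ is uniformly below, above, or incomparable to $M$; combining this order-uniformity with the join inside $N$ and the position of $y$, I would derive a module of $G$ overlapping $M$, again contradicting strongness. This argument is essentially the order-dual of the one above, so I would set it up to reuse the same construction and the same induction.

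The main obstacle in both directions is the same step: manufacturing the overlapping module from the splitting vertex. The difficulty is that a third vertex $w$ outside $M$ may be comparable to all of $M$ yet sit in the gap between $D$ and $U$ incomparably to $y$, so that neither $D\cup\{y\}$ nor $D$ together with the whole gap is automatically a module; controlling these gap vertices is exactly what forces the induction and the careful use of transitivity. Equivalently, the entire theorem reduces to the prime case, namely that the comparability graph of a prime (module-free) partial order is itself a prime graph, and it is this reduction that the structural work above is designed to establish.
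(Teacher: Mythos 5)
Your opening structural fact is exactly the paper's first move in the \emph{if} direction: a splitter $y$ of a module $M$ of $\ucl{G}$ must be comparable to all of $M$, transitivity yields the partition $D<U$, and Proposition~\ref{P:series-module} makes $M$ series in $\ucl{G}$. But from there your proof stops precisely where the real work begins: you say you ``would then use the splitting vertex $y$ to build a module of $\ucl{G}$ that properly overlaps $M$,'' and you yourself flag the external gap vertices as an unresolved obstacle. The paper closes exactly this hole, with no induction: it defines $X=\{x'\notin M : M\cap P(x')=D,\ M\cap S(x')=U\}$ (all vertices with the same split as $y$) and shows by transitivity that any vertex outside $M\cup X$ adjacent to one vertex of $M\cup X$ is adjacent to all of them, so $M\cup X$ is a module of $\ucl{G}$; being series by Proposition~\ref{P:series-module}, it would make $M$ a series node with a series parent, contradicting strongness. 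Incidentally, your stated difficulty is illusory: ``$D$ together with the whole gap,'' i.e.\ $D\cup X$, \emph{is} a module of $\ucl{G}$ --- any vertex adjacent to one vertex of $M$ is comparable to all of $M$ (since $M$ is a module of $\ucl{G}$), and transitivity through an element of $D$ or of $U$ then makes any vertex not in $X$ comparable to every member of $X$ --- and $D\cup X$ overlaps $M$, giving an even more direct contradiction than the paper's. But this verification is the actual content of the direction, and it is absent; moreover, the induction you introduce in its place (children of $M$ lie inside $D$ or $U$) concerns vertices \emph{inside} $M$ and does nothing to control the external vertices you correctly identify as the problem.

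The \emph{only if} direction is a genuine gap, not an ``order-dual.'' There you must produce a module of $G$ overlapping $M$, which is strictly harder than producing one of $\ucl{G}$ (modules of $G$ are modules of $\ucl{G}$, not conversely), and your sketch --- ``combining this order-uniformity with the join inside $N$ and the position of $y$, I would derive a module of $G$ overlapping $M$'' --- names no construction at all. The paper argues this direction by a completely different route: taking $M^+$, the smallest strong module of $\ucl{G}$ containing $M$, it runs a case analysis on its type --- prime forces $M=M^+$ outright; parallel in $\ucl{G}$ stays parallel in $G$; series in $\ucl{G}$ becomes \emph{ordered} in $G$, because the child quotient is an acyclic transitive tournament, i.e.\ a total order --- and in each degenerate case a proper union of children would be weak in $G$ as well (via Proposition~\ref{P:parent-module}), so $M=M^+$. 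Note this is where acyclicity does its essential work (Example~\ref{Ex:transitive-cyclic} shows the direction fails for cyclic transitive digraphs), whereas your sketch invokes acyclicity only to rule out series modules of $G$. Until you supply the overlapping-module verification in the \emph{if} direction and an actual argument (the paper's or otherwise) for \emph{only if}, the proposal is a plan rather than a proof.
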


\begin{proof}
(if) Let $M$ be a strong module of $\ucl{G}$, and assume to the contrary that $M$ is not even a module of $G$.
Then there must exist a \emph{splitter} vertex $x\in V\setminus M$ for which the predecessor/successor relationships to vertices in $M$ differ.
As $M$ is a module in $\ucl{G}$, $x$ must be adjacent to all vertices in $M$.
We can thus partition $M$ into non-empty sets $M_P=M\cap P(x)$ and $M_S=M\cap S(x)$, where $P(x)$ and $S(x)$ denote the predecessors and successors of $x$, respectively.
Since $G$ is transitive, there must exist an edge from every vertex in $M_P$ to every vertex in $M_S$.
By Proposition~\ref{P:series-module}, $M$ must be a series module in $\ucl{G}$.

Denote by $X$ the set of all vertices $x'\in V\setminus M$ with $M\cap P(x') = M_P$ and $M\cap S(x') = M_S$.
In particular we have $x\in X$, so $X$ is not empty.
We claim that $M\cup X$ is a series module in $\ucl{G}$.
To show this claim, let $y\in V\setminus(M\cup X)$.
The relationships between $y$, $X$, $M_P$ and $M_S$ are visualized below.
\begin{center}
\begin{tikzpicture}[scale=1.5,swap]
\foreach \pos/\name/\label in {%
{(0,1)/P/$M_P$},
{(1,0.5)/X/$X$},
{(0,0)/S/$M_S$}, {(2,0)/Y/$y$}}
    \node[vertex] (\name) at \pos {\label};
\foreach \source/\dest in {P/X, P/S, X/S}
    \path[edge] (\source) -- (\dest);
\path[uedge,dotted] (Y) edge[below] node{?} (S);
\path[uedge,dotted] (Y) edge node{?} (X);
\path[uedge,dotted] (Y) edge[bend right=30,above] node{?} (P);
\draw [blue] plot [smooth cycle, tension=1.25] coordinates {(0,-.5) (.5,.5) (0,1.5) (-.5,0.5)}; 
\end{tikzpicture}
\end{center}

If $y$ is not adjacent to any vertex in $M\cup X$, then $y$ does not split $M\cup X$.
Otherwise, if $y$ is adjacent to $x'\in X$, then either $(y,x')\in E$ or $(x',y)\in E$.
As $G$ is transitive, it follows that either $(y,s)\in E$ for all $s\in M_s$, or $(p,y)\in E$ for all $p\in M_p$.
In both cases $y$ is adjacent to a vertex in $M$.

Since $M$ is a module in $\ucl{G}$, $y$ must be adjacent to all vertices in $M$ whenever it is adjacent to one.
Since $y\notin X$, we either have $(y,p)\in E$ for some $p\in M_p$, or $(s,y)\in E$ for some $s\in M_S$.
As $G$ is transitive, it follows that either $(y,x')\in E$ for all $x'\in X$, or $(x',y)\in E$ for all $x'\in X$.
In both cases, $y$ is adjacent to all vertices in $M\cup X$, and hence does not split $M\cup X$ in $\ucl{G}$.

This shows the claim that $M\cup X$ is a module in $\ucl{G}$.
Since every vertex in $M$ is adjacent to every vertex in $X$, $M\cup X$ must be of the series type by Proposition~\ref{P:series-module}.
However, that would mean that both $M$ and its parent node in the modular decomposition of $\ucl{G}$ are of the series type, which contradicts $M$ being a strong module in $\ucl{G}$.
This disproved our assumption that $M$ is not a module in $G$.

Finally, since every module in $G$ is a module in $\ucl{G}$, every weak (i.e., overlapping) module of $G$ is a weak module of $\ucl{G}$.
Thus $M$ must be a strong module of $G$.

(only if) If $M$ is a strong module in $G$, then $M$ is clearly a module in $\ucl{G}$, and we only need to show that it is strong.
Denote by $M^+$ the minimal strong module in $\ucl{G}$ containing $M$.
If $M^+$ is prime, then $M=M^+$ follows immediately.
If $M^+$ is parallel in $\ucl{G}$ then it is parallel in $G$, and so is $M$ by Proposition~\ref{P:parent-module}, and $M=M^+$ follows again.
Finally, if $M^+$ is series in $\ucl{G}$, then the child quotient graphs of $M^+$ and $M$ in $G$ are oriented cliques.
Since $G$ is acyclic, this orientation must induce a total order, so $M^+$ and $M$ are both ordered modules in $G$, and $M=M^+$ follows once more.
\end{proof}

This immediately allows us to derive the modular decomposition of $G$ from that of $\ucl{G}$.

\begin{corollary}
Let $G=(V,E)$ be a transitive DAG.
Then the modular decomposition tree of $G$ can be obtained from the modular decomposition tree of $\ucl{G}$ by relabelling series nodes as ordered nodes.
\end{corollary}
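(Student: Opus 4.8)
The plan is to use Theorem~\ref{Th:transitive-strong-modules} to fix the \emph{shape} of the tree and then track how the node \emph{labels} transform. Since a transitive DAG and its undirected closure share exactly the same strong modules, the two modular decomposition trees have identical node sets and identical containment (parent--child) structure; they can differ only in the labels attached to non-leaf nodes. It therefore suffices to match label types across the two trees, keeping in mind that $G$ carries no series nodes (being acyclic) while $\ucl{G}$ carries no ordered nodes (being undirected).

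The key reduction is that undirected closure commutes with restriction and with quotienting by a congruence partition. Concretely, for a non-leaf node $N$ with child partition $\mathcal{P}=\text{children}(N)$ --- the \emph{same} partition in both trees by Theorem~\ref{Th:transitive-strong-modules} --- I would show that the child quotient graph of $N$ in $\ucl{G}$ is the undirected closure of the child quotient graph $Q=(G|N)/\mathcal{P}$ of $N$ in $G$; that is, $(\ucl{G}|N)/\mathcal{P}=\ucl{Q}$. This holds because an edge between two blocks survives in some direction precisely when $G$ has an edge between them in some direction. I would also record that $Q$ is itself a transitive DAG, being the quotient of a transitive DAG by a congruence partition.

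With these in hand the label correspondence is immediate. If $N$ is parallel in $G$ then $Q$ is edgeless, so $\ucl{Q}=Q$ is edgeless and $N$ is parallel in $\ucl{G}$, and conversely. If $N$ is ordered in $G$ then $Q$ is a total order, whose undirected closure is a clique, so $N$ is series in $\ucl{G}$; conversely, if $N$ is series in $\ucl{G}$ then $\ucl{Q}$ is a clique, forcing $Q$ to be an orientation of a clique, which --- since $Q$ is a transitive DAG --- must be a total order, so $N$ is ordered in $G$. The prime case is handled by applying Theorem~\ref{Th:transitive-strong-modules} to the transitive DAG $Q$: it shows $Q$ and $\ucl{Q}$ have the same strong modules, so $Q$ has only trivial strong modules (i.e.\ is prime) exactly when $\ucl{Q}$ does. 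Hence $N$ is prime in $G$ iff $N$ is prime in $\ucl{G}$.

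Collecting the three biconditionals, the label map between the trees is prime$\,\leftrightarrow\,$prime, parallel$\,\leftrightarrow\,$parallel, and ordered$\,\leftrightarrow\,$series, with no series nodes on the $G$ side and no ordered nodes on the $\ucl{G}$ side. Thus replacing every series label of $\ucl{G}$ by an ordered label yields exactly the labels of $G$, the total order on the children of each such node being inherited directly from the orientation in $G$. I expect the only delicate point to be the commutation identity $(\ucl{G}|N)/\mathcal{P}=\ucl{Q}$ together with the verification that $Q$ is acyclic and transitive; once these are established, the remaining case analysis --- including the prime case via a recursive appeal to Theorem~\ref{Th:transitive-strong-modules} --- is routine.
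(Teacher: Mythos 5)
Your overall route is the same as the paper's: invoke Theorem~\ref{Th:transitive-strong-modules} to conclude that the two trees have identical node sets and identical shape, then match labels case by case. The paper does this in three terse sentences; your commutation identity $(\ucl{G}|N)/\mathcal{P}=\ucl{Q}$ and the observation that $Q$ is again a transitive DAG are precisely the details it leaves implicit, and your parallel and ordered/series cases are handled correctly.

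However, your justification of the prime case is flawed. You assert that having only trivial strong modules is the same as being prime (``so $Q$ has only trivial strong modules (i.e.\ is prime)''), but this equivalence is false: a clique, an edgeless graph, or a total order on three or more vertices has only trivial \emph{strong} modules (all of its non-trivial modules pairwise overlap and are therefore weak), yet none of these is prime. Worse, the criterion is vacuous in your setting: \emph{every} child quotient graph arising in a modular decomposition --- prime or degenerate --- has only trivial strong modules, so the recursive appeal to Theorem~\ref{Th:transitive-strong-modules} cannot distinguish a prime $Q$ from a degenerate one. Fortunately the gap closes with material you already have. Since a child quotient graph is always prime, a clique, edgeless, or a total order, since $G$ (being acyclic) has no series nodes and $\ucl{G}$ (being undirected) has no ordered nodes, the two biconditionals you did establish (parallel $\leftrightarrow$ parallel, ordered $\leftrightarrow$ series) leave prime $\leftrightarrow$ prime as the only remaining possibility, by elimination. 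Alternatively, argue directly: every module of $Q$ is a module of $\ucl{Q}$, so $\ucl{Q}$ prime implies $Q$ prime; conversely, if $\ucl{Q}$ is not prime it must be a clique or edgeless, forcing $Q$ to be a total order or edgeless respectively, hence not prime.
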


\begin{proof}
By Theorem~\ref{Th:transitive-strong-modules} the modular decomposition trees of $G$ and $\ucl{G}$ can only differ in their labelling.
Prime and parallel nodes in $\ucl{G}$ are clearly prime and parallel nodes in $G$ as well.
As $G$ is acyclic, series nodes in $\ucl{G}$ must be ordered nodes in $G$.
\end{proof}

\section{Modular Decomposition w.r.t. Transitive Closure}\label{S:reduction}

\newcommand{\seqflow}{\texttt{seq}}
\newcommand{\parflow}{\texttt{par}}
\newcommand{\anc}{\texttt{anc}}
\newcommand{\desc}{\texttt{desc}}

Often the transitive graph we seek to decompose will not be given directly.
Instead, the input is an arbitrary directed graph $G$, and we wish to find a modular decomposition of its transitive closure $\tcl{G}$.

This could be approached by first computing $\tcl{G}$, and then applying some modular decomposition algorithm.
However, $\tcl{G}$ can be much larger that $G$, especially if $G$ is sparse.
We therefore seek to pre-process $G$ by identifying certain modules of $\tcl{G}$, which can then be used to reduce $G$ before computing its transitive closure.
While this does not improve the worst-case complexity for arbitrary graphs, it can greatly improve performance for sparse graphs, and is therefore highly relevant in practice.

We first identify the strongly connected components of $G$ and contract them, in linear time.
These are precisely the series nodes of $\tcl{G}$, which we insert into the modular decomposition tree afterwards.

Our main preprocessing algorithm then consists of two reduction rules, each of which merges two vertices into one.
They are applied in any order until no more rule applications are possible -- as we will show later, the resulting reduced graph is unique.

\begin{myrule}[sequential flow]~\\
If $b$ is the only successor of $a$, and $a$ the only predecessor of $b$, merge $a$ and $b$.
\end{myrule}

\begin{myrule}[parallel flow]~\\
If $a,b$ with $a\neq b$ have the same sets of predecessors and successors, merge $a$ and $b$.
\end{myrule}

One may easily observe that nodes merged using sequential flow form ordered modules w.r.t. $\tcl{G}$, while nodes merged using parallel flow are parallel modules w.r.t. $\tcl{G}$.
As part of the reduction process, we may therefore construct an \emph{unreduced} modular decomposition tree for each merged node-set.
Unreduced modular decompositions may contain superfluous nodes for weak modules, but these can easily be eliminated by deleting ordered and parallel nodes with parents of the same type \cite{DBLP:journals/dmtcs/McConnellS00}.

Applying these reduction rules as far as possible can be done in linear time, by keeping track of all possible rule application, with updates after each merge.
Parallel nodes can be identified efficiently by maintaining a hash table, with the pairs of sets of predecessor and successor vertices as keys.
To ensure linear time complexity regardless of node degree, we identify predecessor / successor sets by a fixed-size hash value that can be updated in constant time when vertices are merged by either rule.
One possible implementation is a simple XOR of cryptographic hash-values of vertices.

\begin{example}
Consider once more the graph from Example~\ref{E:transitive-orientation}.
Initially, we can apply two reduction rules, $\seqflow(D,F)$ and $\parflow(B,C)$, in either order.
\begin{center}
\begin{tikzpicture}[gscale,top]
\foreach \pos/\name in {%
{(1,2)/A},
{(0,1)/B}, {(1,1)/C}, {(2,1)/D},
{(1,0)/E}, {(2,0)/F}}
    \node[vertex] (\name) at \pos {\name};
\foreach \source/\dest in {%
A/B, A/C, A/D, B/E, C/E, D/F}
    \path[edge] (\source) -- (\dest);
\end{tikzpicture}
\quad\raisebox{-2cm}{$\xRightarrow[D,F]{\seqflow}$}\quad
\begin{tikzpicture}[gscale,top]
\foreach \pos/\name in {%
{(1,2)/A},
{(0,1)/B}, {(1,1)/C}, {(2,1)/DF},
{(1,0)/E}}
    \node[vertex] (\name) at \pos {\name};
\foreach \source/\dest in {%
A/B, A/C, A/DF, B/E, C/E}
    \path[edge] (\source) -- (\dest);
\end{tikzpicture}
\quad\raisebox{-2cm}{$\xRightarrow[B,C]{\parflow}$}\quad
\begin{tikzpicture}[gscale,top]
\foreach \pos/\name in {%
{(1,2)/A},
{(0,1)/BC}, {(2,1)/DF},
{(1,0)/E}}
    \node[vertex] (\name) at \pos {\name};
\foreach \source/\dest in {%
A/BC, A/DF, BC/E}
    \path[edge] (\source) -- (\dest);
\end{tikzpicture}
\end{center}
After a reduction step, new rule applications involving the merged vertex may become available:
\begin{center}
\quad\raisebox{-1cm}{$\xRightarrow[BC,E]{\seqflow}$}\quad
\begin{tikzpicture}[gscale,top]
\foreach \pos/\name in {%
{(0.5,2)/A},
{(0,1)/BCE}, {(1,1)/DF}}
    \node[vertex] (\name) at \pos {\name};
\foreach \source/\dest in {%
A/BCE, A/DF}
    \path[edge] (\source) -- (\dest);
\end{tikzpicture}
\quad\raisebox{-1cm}{$\xRightarrow[BCE,DF]{\parflow}$}\quad
\begin{tikzpicture}[gscale,top]
\foreach \pos/\name/\label in {%
{(0,2)/A/A},
{(0,1)/B/B-F}}
    \node[vertex] (\name) at \pos {\label};
\path[edge] (A) -- (B);
\end{tikzpicture}
\quad\raisebox{-1cm}{$\xRightarrow[A,B\text{-}F]{\seqflow}$}\quad
\begin{tikzpicture}[scale=1.5,auto,swap,top]
\node[vertex] (A) at (0,0) {A-F};
\end{tikzpicture}
\end{center}
~\\[5pt]
In this particular instance the graph reduces to a single vertex, making it trivial to decompose.
\qed
\end{example}

We observe that no vertex can meet the conditions of both a sequential and parallel flow rule application at the same time.
This simplifies processing as it restricts interaction between rule applications.
We write $\seqflow(a,b)$ ($\parflow(a,b)$) to indicate that the conditions of the sequential (parallel) flow rule are met for $a,b$.

\begin{lemma}\label{L:flow-rules}
Let $a,b$ be two vertices in $G=(V,E)$ with $\seqflow(a,b)$.
Then there does not exist any vertex $x\in V$ with $\parflow(a,x)$ or $\parflow(b,x)$.
\end{lemma}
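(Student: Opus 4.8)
The plan is to prove the lemma directly by contradiction, treating the two assertions -- that neither $\parflow(a,x)$ nor $\parflow(b,x)$ can hold -- as separate cases. The key observation is that $\seqflow(a,b)$ forces both $S(a)$ and $P(b)$ to be singletons, namely $S(a)=\{b\}$ and $P(b)=\{a\}$, and that $\parflow(u,v)$ unpacks to $u\neq v$ together with $P(u)=P(v)$ and $S(u)=S(v)$. Once the definitions are unpacked, each singleton condition will immediately pin down any hypothetical partner $x$ through its forced unique successor or predecessor.

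For the first case, I would suppose toward a contradiction that $\parflow(a,x)$ holds for some $x\in V$. Then $x\neq a$ and $S(x)=S(a)=\{b\}$, so in particular $(x,b)\in E$, i.e.\ $x\in P(b)$. Since $\seqflow(a,b)$ gives $P(b)=\{a\}$, this forces $x=a$, contradicting $x\neq a$. For the second case, I would suppose $\parflow(b,x)$ holds for some $x\in V$. Then $x\neq b$ and $P(x)=P(b)=\{a\}$, so $(a,x)\in E$, i.e.\ $x\in S(a)$. Since $S(a)=\{b\}$, this forces $x=b$, again a contradiction. Together these two cases establish the claim.

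The argument is short and elementary, so I do not anticipate a genuine obstacle; the only point requiring care is to make sure both halves of $\parflow$ (equality of predecessor sets and of successor sets) are invoked in the correct direction -- the successor condition settles the $a$ case and the predecessor condition settles the $b$ case -- and to cover both $a$ and $b$ separately rather than by symmetry, since the sequential flow rule treats them asymmetrically. It is worth noting that this proof uses none of the machinery from the earlier sections: it does not rely on $G$ being acyclic or transitive, so it applies verbatim to the arbitrary input graph of the reduction phase, which is exactly the setting in which Rules~1 and~2 are applied.
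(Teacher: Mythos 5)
Your proof is correct and is essentially the paper's own argument, just written out in more detail: the paper likewise observes that $P(b)=\{a\}$ rules out any $x$ sharing $a$'s successor set, and $S(a)=\{b\}$ rules out any $x$ sharing $b$'s predecessor set. The extra care you take in spelling out the contradiction (and noting that no acyclicity or transitivity is needed) is fine but does not change the approach.
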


\begin{proof}
As $a$ is the only predecessor of $b$, no other vertex $x$ has the same successors as $a$.
As $b$ is the only successor of $a$, no other vertex $x$ has the same predecessors as $b$.
\end{proof}

Next, we show that the order of rule application does not affect the final result.
This property of rule systems is typically referred to as \emph{confluence}.

\begin{theorem}
The sequential/parallel flow rules are confluent.
\end{theorem}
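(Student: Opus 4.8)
The plan is to establish confluence via Newman's Lemma: since the rewriting system terminates, it suffices to prove \emph{local confluence}, i.e.\ that whenever two distinct rule applications are possible on a graph $H$, the two resulting graphs can be rewritten to a common descendant. Termination is immediate, as every application of either rule merges two vertices and thus strictly decreases $|V|$, so starting from a finite graph only finitely many steps are possible. The work therefore reduces entirely to a case analysis of how two simultaneously applicable rules can interfere.

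First I would consider two applications $r_1,r_2$ merging the vertex pairs $\{a,b\}$ and $\{c,d\}$, and split on $|\{a,b\}\cap\{c,d\}|$. If the pairs coincide, both applications produce the same graph and nothing is to show. If the pairs are disjoint, I claim $r_1$ and $r_2$ commute: each rule's precondition constrains only a specific part of the neighbourhood structure -- the out-neighbourhood of $a$ and the in-neighbourhood of $b$ for $\seqflow(a,b)$, and the full neighbourhoods of $c,d$ for $\parflow(c,d)$ -- and merging a disjoint pair preserves exactly these, since a $\parflow$ pair has identical neighbourhoods to begin with (so a symmetric merge keeps them equal) while $\seqflow$ forbids the cross edges that could disturb the relevant unique neighbour. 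Hence performing $r_1$ leaves $r_2$ applicable and vice versa, and as the two merges act on disjoint vertex sets the two orders yield the same quotient graph.

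The genuinely interesting case is when the pairs share exactly one vertex, and here Lemma~\ref{L:flow-rules} does the decisive pruning: a sequential and a parallel application can never share a vertex, so $r_1$ and $r_2$ must be of the same type. If both are sequential, the shared vertex forces a chain $a\to b\to c$ with $\seqflow(a,b)$ and $\seqflow(b,c)$, every other overlap pattern contradicting the uniqueness of the forced successor or predecessor. Contracting either edge of the chain leaves the merged vertex with a unique remaining in- or out-neighbour equal to the third endpoint, so a single further sequential application reunites both branches at the fully contracted vertex $\{a,b,c\}$. If both are parallel, the shared vertex together with transitivity of ``having identical predecessors and successors'' places all three involved vertices in one class; since such vertices are pairwise non-adjacent (an edge between them would, by equality of neighbourhoods, force a self-loop), contracting one pair preserves the common neighbourhood, and one more parallel application again rejoins the branches.

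I expect the main obstacle to be the bookkeeping in these overlap cases: verifying that after the first contraction the \emph{joining} rule genuinely applies -- that the merged vertex inherits precisely the right unique neighbour in the sequential case, or the right common neighbourhood in the parallel case. Lemma~\ref{L:flow-rules} is exactly what makes this tractable, as it removes the mixed sequential/parallel overlap, which would otherwise lack a clean joining step. With local confluence confirmed in every case, Newman's Lemma upgrades it to global confluence, which in turn guarantees the unique reduced graph asserted in the surrounding discussion.
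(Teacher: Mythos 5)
Your proof is correct and follows essentially the same route as the paper: Newman's diamond lemma reduces the problem to local confluence, Lemma~\ref{L:flow-rules} eliminates mixed sequential/parallel overlaps, and each same-type overlap sharing a vertex is resolved by merging all three vertices with one further application of the same rule. You simply carry out in detail (termination, the disjoint-pair commutation, and the chain/common-neighbourhood case analysis) what the paper's proof asserts tersely.
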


\begin{proof}
By Newman's diamond lemma \cite{diamond-lemma} it suffices to show that the rules are locally confluent, i.e., that any two graphs obtained from a graph $G$ via a single rule application can be reduced to the same graph by further rule applications.
This is trivial if the two rule applications do not interact, and by Lemma~\ref{L:flow-rules} interaction can only occur for pairs of parallel or pairs of sequential rules that share a vertex.
In both cases all three vertices involved in the pair of rule applications can be merged into a single vertex with another application of the same rule.
\end{proof}

Intuitively, our reduction steps identify non-prime modules in a bottom-up fashion, until they encounter a prime module or fail to identify an ordered or parallel node.
While there is little to be done about prime modules, we find that failure to identify ordered or parallel nodes is always due to transitive edges.

\begin{definition}[prime-free]~\\
We call a graph \emph{prime-free} iff it does not contain any prime modules.
We call a module $M$ of graph $G$ \emph{prime-free} iff $G|M$ is prime-free.
\end{definition}

\begin{theorem}
Let $G$ be acyclic and transitively reduced.
Then sequential flow and parallel flow rules eliminate all prime-free modules of $\tcl{G}$.
\end{theorem}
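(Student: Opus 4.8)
The plan is to prove that the rules contract \emph{every} prime-free module of $\tcl{G}$ to a single vertex, and then to lift this to the unique fully reduced graph using confluence. Concretely, I would establish by induction on $|M|$ the following statement: if $G$ is a transitively reduced DAG and $M$ is a prime-free module of $\tcl{G}$ with $|M|\geq 2$, then some pair of vertices of $M$ satisfies $\seqflow$ or $\parflow$ in $G$, and merging that pair yields a transitively reduced DAG $G'$ in which the contracted set $M'$ is again a prime-free module of $\tcl{G'}$ with $|M'|=|M|-1$. Iterating collapses $M$ to a point. Since the rules are confluent (shown above) and merges are never undone, any reduction sequence that contracts $M$ extends to a terminating sequence reaching the unique reduced graph, in which the vertices of $M$ remain merged; applying this to every prime-free module shows that all of them are eliminated.

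For the inductive step I would locate a mergeable pair by descending to a \emph{deepest} internal node $N$ of the modular decomposition tree of $\tcl{G}|M$, so that all children of $N$ are singletons. Since $M$ is prime-free, $N$ is not prime, and acyclicity of $G$ forbids series nodes (an oriented clique in a DAG is a total order, cf.\ the proof of Theorem~\ref{Th:transitive-strong-modules}); hence $N$ is ordered or parallel. Writing its children as $u_1,\dots,u_m$, the ordered case gives a chain $u_1<\dots<u_m$ whose transitive reduction inside $N$ is the path $u_1\to\cdots\to u_m$, while the parallel case gives pairwise incomparable $u_i$ sharing the same predecessor and successor sets in $\tcl{G}$. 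The crux is to deduce the \emph{global} rule conditions from these \emph{local} data, and this is exactly where transitive reducedness of $G$ enters: because $N$ is a module, any external successor $w$ of $u_1$ is a successor of all of $N$, so $u_1<u_m<w$ forces the edge $u_1\to w$ to be transitive and therefore absent from $G$; the symmetric argument handles external predecessors. Thus $u_2$ is the unique successor of $u_1$ and $u_1$ the unique predecessor of $u_2$, giving $\seqflow(u_1,u_2)$. In the parallel case, equality of the $\tcl{G}$-predecessor (resp.\ successor) sets forces equality of the immediate predecessors (resp.\ successors) in the transitively reduced $G$, giving $\parflow(u_1,u_2)$.

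It then remains to check that merging such a pair preserves all invariants. I would first argue that $\{u_1,u_2\}$ is a module of $\tcl{G}$ (a sub-interval of the ordered node, resp.\ a subset of the parallel node) and that contracting it in $G$ commutes with taking the transitive closure, so that $\tcl{G'}$ is precisely $\tcl{G}$ with this module contracted; the modular decomposition of $\tcl{G'}|M'$ is then obtained from that of $\tcl{G}|M$ by merging two children of $N$, so no prime node is created and $M'$ stays prime-free. I would also verify that $G'$ remains acyclic and transitively reduced: a cycle through the merged vertex would yield a cycle in $G$, and any transitive edge in $G'$ would, after expanding the merged vertex back into the relevant edge or the $u_1\to u_2$ step, produce a path of length $\geq 2$ witnessing a transitive edge already present in $G$, contradicting its transitive reducedness. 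This closes the induction.

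I expect the main obstacle to be this bridge between the local structure at $N$ and the global quantifiers in the rules (``the only successor'', ``the same sets of predecessors and successors''), since the decomposition only controls behaviour \emph{inside} $M$. The hypothesis that $G$ is transitively reduced is what rescues the argument, guaranteeing that the neighbours shared across the module $N$ lie strictly beyond its endpoints and hence contribute no direct edges to $u_1$ or $u_2$. A secondary technical point is the commutation of vertex merging with transitive closure, which must be established for both rules so that the inductive hypothesis about $\tcl{G}$-modules can be reapplied to $\tcl{G'}$.
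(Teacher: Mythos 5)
Your proposal is correct and takes essentially the same approach as the paper: reduce to two-element modules of $\tcl{G}$ (pairs of children of ordered or parallel nodes), and use transitive reducedness of $G$ to convert their local module conditions into the global conditions of the sequential/parallel flow rules. You additionally spell out the induction, the preservation of acyclicity and transitive reducedness under merging, and the confluence lifting, all of which the paper's terser proof (``it suffices to show that all modules of size 2 are merged'', followed by the arc/no-arc dichotomy) leaves implicit.
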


\begin{proof}
Since non-prime modules are made up of overlapping 2-sets of their children, all of which are modules, it suffices to show that all all modules of size 2 are merged.
Denote by $\anc(x)$, $\desc(x)$ the ancestors and descendants of vertex $x$ w.r.t. $G$, respectively.
Now let $\{a,b\}\subseteq V$ be a module w.r.t. $\tcl{G}$.
Then $a,b$ have the same ancestors and descendants outside of $\{a,b\}$, that is
\[
\anc(a)\setminus\{a,b\} = \anc(b)\setminus\{a,b\}
\qquad\text{and}\qquad
\desc(a)\setminus\{a,b\} = \desc(b)\setminus\{a,b\}
\]
As $G$ is acyclic, it cannot contain both arcs $a\to b$ and $b\to a$.
This leaves two possibilities.

If $G$ contains an arc between $a$ and $b$, say $a\to b$, then $\anc(b) = \{a\} \cup \anc(a)$ and $\desc(a) = \{b\} \cup \desc(b)$.
Since $G$ is transitively reduced, it follows that $a$ is the only parent of $b$, and $b$ the only child of $a$.
Thus $a,b$ will be merged by the sequential flow rule.

If $G$ contains no arc between $a$ and $b$, then $\anc(a) = \anc(b)$ and $\desc(a) = \desc(b)$.
Since $G$ is transitively reduced, it follows that $a$ and $b$ share the same parents and children in $G$, and thus will be merged using the parallel flow rule.
\end{proof}

In particular we find that transitively reduced prime-free DAGs (e.g. all trees) are reduced to a single vertex.
If $G$ is transitively reduced but its modular decomposition contains a single prime-node as the root, then $G$ reduces to the child-quotient graph of this prime node.
In either case, no further decomposition is needed, provided it is known a priori that the reduced graph is prime.

\section{Transitive Orientation of Complement Graphs}

We now seek to find a transitive orientation of the complement of a transitive graph $\tcl{G}$, provided one exists.
The challenge here is to do so in time (near-) linear in the size of $G$.
We won't be able to achieve this for arbitrary digraphs, but we can for graphs $G$ with transitively orientable complement.

The \emph{Ordered Vertex Partitioning} (OVP) algorithm in \cite{DBLP:journals/dmtcs/McConnellS00} provides both a modular decomposition, and an orientation of a given undirected graph, with the provision that the orientation is transitive if the graph is transitively orientable.
However, we do not wish to orient $\ucl{\tcl{G}}$, but its complement $\ucc{G}$, which may be significantly bigger when $G$ is sparse.

Luckily, it turns out that the OVP algorithm can easily be modified to operate on the complement of the given graph, by switching the role of neighbours and non-neighbours in the central \emph{split} operation.
As already observed in \cite[Section~11.2]{DBLP:journals/dmtcs/DahlhausGM02}, this does not affect the complexity of the algorithm, allowing us to compute a transitive orientation of $\ucc{G}$ in time near-linear in the size of $\tcl{G}$.

The remaining problem is that $\tcl{G}$ can be much larger than $G$, and while the reduction steps of Section~\ref{S:reduction} can be applied here as well (transitively orienting the complements of ordered or parallel quotient graphs is trivial), any approach involving transitive closure computation won't be able to achieve near-linear complexity.
To maintain the complexity bound one could simply apply the modified OVP algorithm to find a transitive orientation of $\ucl{G}$ rather than $\ucc{G}$.
As we will show in the following, this approach delivers a transitive orientation for $\ucc{G}$ whenever $\ucl{G}$ is transitively orientable.

The basis for computing transitive orientations is the forcing relationship between edges, where orientation of one edge forces the orientation of another to maintain transitivity.

\begin{definition}[forced orientation]~\\
Let $G$ be an undirected graph.
We say that two edges $(a,b),(c,d)$ in $G$ \emph{force each other}, denoted as $(a,b)\forced(c,d)$, iff either $a=c$ and $b,d$ are not adjacent in $G_c$, or $b=d$ and $a,c$ are not adjacent in $G$.
We denote the transitive closure of relation $\forced$ by $\fcl$.
\end{definition}

It has been shown in \cite[Proof of Lemma 2.1]{DBLP:conf/soda/McConnellS94} that if $G$ is prime, then orienting one edge forces the orientation of all other edges.
Orienting the quotient graphs of series and parallel nodes is trivial.

\begin{example}\label{E:forced}
Consider the graph $G$ below, which is the undirected complement of the transitive closure of the graph from Example~\ref{E:transitive-orientation}.
If we pick an orientation of one edge, e.g. $E\to F$, then the orientation of most other edges is forced by it.
E.g. $E\to F$ forces $E\to D$, which in turn forces $B\to D$.
\newcommand{\vdef}{%
\foreach \pos/\name in {%
{(1,2)/A},
{(0,1)/B}, {(1,1)/C}, {(2,1)/D},
{(1,0)/E}, {(2,0)/F}}
    \node[vertex] (\name) at \pos {\name}}
\begin{center}
\begin{tikzpicture}[gscale,top]
\vdef;
\foreach \source/\dest in {%
B/C, B/F, C/D, C/F, E/D, E/F}
    \path[uedge] (\source) -- (\dest);
\path[uedge] (B) to [bend left=30] (D);
\end{tikzpicture}
\quad\raisebox{-2cm}{$\xRightarrow[\text{one edge}]{\text{orient}}$}\quad
\begin{tikzpicture}[gscale,top]
\vdef;
\foreach \source/\dest in {%
B/C, B/F, C/D, C/F, E/D}
    \path[uedge] (\source) -- (\dest);
\path[uedge] (B) to [bend left=30] (D);
\path[edge] (E) to (F);
\end{tikzpicture}
\quad\raisebox{-2cm}{$\xRightarrow[\text{orientation}]{\text{forced}}$}\quad
\begin{tikzpicture}[gscale,top]
\vdef;
\path[uedge] (B) to (C);
\foreach \source/\dest in {%
B/F, C/D, C/F, E/D, E/F}
    \path[edge] (\source) -- (\dest);
\path[edge] (B) to [bend left=30] (D);
\end{tikzpicture}
\end{center}
The only edge not forced this way is the one between B and C.
This happens because $G$ is not a prime graph, and BCE is a module of $G$.
Orienting the missing edge as $B\to C$ yields the transitive orientation from Example~\ref{E:transitive-orientation}.
Orienting it as $C\to B$ yields another transitive orientation.
\qed
\end{example}

A critical observation in relating transitive orientations of $\ucc{G}$ and $\ucmp{G}$ is that the transitive closures of their forcing relationships are closely related.

\begin{lemma}\label{L:forcing}
Let $(a,b),(c,d)$ be two edges in $\ucc{G}$ that indirectly force each other in $\ucc{G}$, i.e., we have $(a,b) \fcl (c,d)$.
Then $(a,b),(c,d)$ indirectly force each other in $\ucmp{G}$.
\end{lemma}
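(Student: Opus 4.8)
The plan is to peel the statement down to a single forcing step and then repair that step across the two graphs using directed paths in $G$. First I would record the inclusion that drives everything: since $E\subseteq\tcl{E}$, vertices adjacent in $G$ are adjacent in $\tcl{G}$, so by complementation every edge of $\ucc{G}$ is an edge of $\ucmp{G}$; thus $\ucc{G}$ is a spanning subgraph of $\ucmp{G}$ and the claim is at least well posed, as $(a,b)$ and $(c,d)$ are genuinely edges of $\ucmp{G}$. Because $\fcl$ is by definition the transitive closure of $\forced$, it suffices to handle one step: I will show that $(a,b)\forced(a,d)$ in $\ucc{G}$ implies $(a,b)\fcl(a,d)$ in $\ucmp{G}$, and then concatenate these implications along a chain witnessing $(a,b)\fcl(c,d)$ in $\ucc{G}$. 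The case where the two edges of a single step share their second vertex rather than their first is symmetric.

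So assume $(a,b)\forced(a,d)$ in $\ucc{G}$, i.e. $b,d$ are non-adjacent in $\ucc{G}$, which means $b,d$ are adjacent in $\tcl{G}$. Without loss of generality $(b,d)\in\tcl{E}$, so I may fix a directed path $b=v_0\to v_1\to\cdots\to v_k=d$ in $G$. The key sub-claim is that $a$ is non-adjacent to every $v_i$ in $\tcl{G}$, so that each $(a,v_i)$ is an edge of $\ucc{G}$. For the interior vertices this follows from the path: if $a$ and some $v_i$ were adjacent in $\tcl{G}$, then either $v_i\to a$, which combined with $b\to v_i$ yields $b\to a$ and contradicts $(a,b)$ being an edge of $\ucc{G}$, or $a\to v_i$, which combined with $v_i\to d$ yields $a\to d$ and contradicts $(a,d)$ being an edge of $\ucc{G}$. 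The endpoints $v_0=b$ and $v_k=d$ are non-adjacent to $a$ by hypothesis. Hence every $(a,v_i)$ lies in $\ucc{G}$, and therefore in $\ucmp{G}$.

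It then remains to observe that consecutive edges $(a,v_i),(a,v_{i+1})$ force each other in $\ucmp{G}$: they share the vertex $a$, and since $v_i\to v_{i+1}$ is an arc of $G$ the endpoints $v_i,v_{i+1}$ are adjacent in $G$ and hence non-adjacent in $\ucmp{G}$, which is exactly the forcing condition. Chaining these steps gives $(a,b)=(a,v_0)\fcl(a,v_k)=(a,d)$ in $\ucmp{G}$, and transitivity of $\fcl$ then upgrades the single-step result to the full statement. I expect the only genuine obstacle to be that a forcing step legitimate in $\ucc{G}$ need not be legitimate directly in $\ucmp{G}$, precisely because $b$ and $d$ may acquire an edge in the larger graph $\ucmp{G}$; the interior-vertex path is what bridges this broken step, and the crux is verifying the non-adjacency sub-claim so that all the intermediate edges genuinely exist in $\ucc{G}$.
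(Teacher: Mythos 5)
Your proof is correct and follows essentially the same route as the paper's: reduce to a single forcing step, take a directed path in $G$ witnessing the $\tcl{G}$-adjacency of the non-shared endpoints, verify that the shared vertex is non-adjacent to every path vertex so the intermediate edges exist, and chain the forcings $(a,v_i)\forced(a,v_{i+1})$ along the path in $\ucmp{G}$. The only (immaterial) differences are that you treat the shared-first-vertex case where the paper treats the symmetric shared-second-vertex case, and you establish the slightly stronger fact that the intermediate edges lie in $\ucc{G}$ rather than merely in $\ucmp{G}$.
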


\begin{proof}
Let $(a,x),(b,x)$ be two edges in $\ucc{G}$ that force each other directly.
Then either $(a,b)$ or $(b,a)$ lies in $\tcl{G}$, say $(a,b)$.
This means $G$ contains a path $a\to v_1 \to \ldots \to v_n \to b$, for some $n$.

Assume $(x,v_i)$ does not lie in $\ucmp{G}$, for $1\leq i\leq n$.
Then either $(x,v_i)$ or $(v_i,x)$ must lie in $G$.
If $(x,v_i)$ lies in $G$, then $(x,b)$ lies in $\tcl{G}$, contradicting $(b,x)\in\ucc{G}$.
If $(v_i,x)$ lies in $G$, then $(a,x)$ lies in $\tcl{G}$, contradicting $(a,x)\in\ucc{G}$.

Hence $(x,v_i)$ must lie in $\ucmp{G}$ for all $i=1\ldots n$, so the following forcing relationships hold in $\ucmp{G}$:
\[
(a,x)
\quad\Gamma\quad (v_1,x)
\quad\Gamma\quad (v_2,x)
\quad\Gamma\quad \ldots
\quad\Gamma\quad (v_n,x)
\quad\Gamma\quad (b,x)
\]

This shows the claim for edges that directly force each other in $\ucc{G}$.
For edges that force each other indirectly, the claim then follows by transitivity of $\fcl$.
\end{proof}

As a consequence of Lemma~\ref{L:forcing}, any transitive orientation of $\ucmp{G}$ is a transitive orientation of $\ucc{G}$ as well, so for graphs $G$ where $\ucmp{G}$ is transitively orientable, we can find a transitive orientation of $\ucc{G}$ in time near-linear in the size of $G$.

\begin{theorem}\label{Th:to-tcc}
Any transitive orientation of $\ucmp{G}$ is a transitive orientation of $\ucc{G}$.
\end{theorem}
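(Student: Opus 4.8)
The plan is to regard a transitive orientation $\mathcal{O}$ of $\ucmp{G}$ as also orienting every edge of $\ucc{G}$. Since $E\subseteq\tcl{E}$ we have $\ucl{G}\subseteq\ucl{\tcl{G}}$, and taking undirected complements reverses the inclusion, giving $\ucc{G}\subseteq\ucmp{G}$ as undirected graphs. Thus the restriction $\mathcal{O}'$ of $\mathcal{O}$ to the edges of $\ucc{G}$ is a well-defined orientation of $\ucc{G}$, and the goal is to show that $\mathcal{O}'$ is transitive.

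First I would record the standard characterisation of transitivity in terms of forcing: an orientation of an undirected graph $H$ is transitive if and only if it is acyclic and \emph{forcing-consistent}, meaning it contains $f$ whenever it contains some edge $e$ with $e\fcl f$. The forward implication is immediate; for the converse, any failure of transitivity is a directed path $u\to v\to w$ with $u\neq w$ that is not closed by $u\to w$. If $\{u,w\}$ is a non-edge of $H$, then $(u,v)\forced(w,v)$, so forcing-consistency would demand $w\to v$, contradicting $v\to w$; and if $\{u,w\}$ is an edge oriented $w\to u$, then $u\to v\to w\to u$ is a directed $3$-cycle, contradicting acyclicity.

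It then remains to verify the two hypotheses for $\mathcal{O}'$ on $H=\ucc{G}$. Acyclicity is inherited from $\mathcal{O}$: a transitive oriented graph contains no directed cycle, and $\mathcal{O}'\subseteq\mathcal{O}$. For forcing-consistency, suppose $e\fcl f$ holds in $\ucc{G}$ with $e\in\mathcal{O}'$. By Lemma~\ref{L:forcing}, $e\fcl f$ then holds in $\ucmp{G}$ as well; since $\mathcal{O}$ is a transitive orientation of $\ucmp{G}$ it respects $\fcl$ there, so $f\in\mathcal{O}$, and because $f$ is an edge of $\ucc{G}$ we conclude $f\in\mathcal{O}'$. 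Hence $\mathcal{O}'$ is acyclic and forcing-consistent, and therefore transitive.

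The substantive input is entirely Lemma~\ref{L:forcing}, which guarantees that every forcing constraint present in $\ucc{G}$ is already present in $\ucmp{G}$, where $\mathcal{O}$ is known to obey it; the only other things to pin down carefully are the characterisation of transitivity via acyclicity plus forcing-consistency (classical, and provable in the two lines above) and the initial edge-inclusion $\ucc{G}\subseteq\ucmp{G}$ on which the whole argument rests. I expect no real obstacle beyond stating these standard facts precisely and checking that the restriction $\mathcal{O}'$ genuinely orients exactly the edges of $\ucc{G}$.
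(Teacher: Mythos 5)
Your proof is correct, and its skeleton is the same as the paper's: everything is reduced, via Lemma~\ref{L:forcing}, to the fact that a transitive orientation of $\ucmp{G}$ already respects every forcing constraint that $\ucc{G}$ can impose. The paper's own proof is two sentences long: it invokes the characterisation that an orientation is transitive iff all pairs of edges forcing each other are suitably oriented, and then cites Lemma~\ref{L:forcing}. Where you differ is in making that characterisation precise, and your version is in fact the correct one: forcing-consistency alone does not characterise transitivity --- a triangle oriented as a directed $3$-cycle satisfies every forcing constraint vacuously, since mutually adjacent vertices force nothing, yet is not transitive --- so the acyclicity hypothesis you add, and then verify for the restricted orientation $\mathcal{O}'$ (which inherits it from the acyclic transitive orientation $\mathcal{O}$), is genuinely needed for the ``if'' direction of that equivalence. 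You also spell out the edge-inclusion $\ucc{G}\subseteq\ucmp{G}$ that makes the restriction $\mathcal{O}'$ a well-defined orientation of $\ucc{G}$, which the paper leaves implicit. So: same route and same key lemma, but your write-up closes a small logical gap that the paper's compressed proof glosses over; the theorem itself is unaffected, precisely because the orientation at hand is inherited from a transitive, hence acyclic, one.
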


\begin{proof}
An orientation of an undirected graph is a transitive orientation iff all pairs of edges forcing each other are suitably oriented (i.e., according to the forcing relationship).
By Lemma~\ref{L:forcing} any two edges in $\ucc{G}$ forcing each other also force each other in $\ucmp{G}$, and thus are suitably oriented.
\end{proof}

In particular this gives us:

\begin{corollary}
Let $G$ be a digraph such that $\ucmp{G}$ is transitively orientable.
Then we can find a transitive orientation of $\ucc{G}$ in near-linear time.
\end{corollary}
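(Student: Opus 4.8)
The plan is to combine the algorithmic machinery described above with Theorem~\ref{Th:to-tcc}, after which the corollary follows with essentially no further work. The key observation is that Theorem~\ref{Th:to-tcc} lets us bypass $\ucc{G}$ altogether: rather than orienting the complement of the (potentially huge) transitive closure, it suffices to orient the undirected complement $\ucmp{G}$ of the original graph.

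Concretely, I would first run the modified OVP algorithm on $\mathcal{U}(G)$, using the neighbour/non-neighbour switch in its split operation so that it orients the complement $\ucmp{G}$ without ever materializing that graph. Since $\ucmp{G}$ is transitively orientable by hypothesis, OVP is guaranteed to return a genuine transitive orientation $\mathcal{O}$ of $\ucmp{G}$, represented compactly as a total vertex ordering in the sense of Section~\ref{S:TO}; and, as already observed above, operating on the complement leaves the complexity unchanged, so this step runs in time near-linear in the size of $\mathcal{U}(G)$, hence near-linear in the size of $G$. I would then invoke Theorem~\ref{Th:to-tcc} directly: any transitive orientation of $\ucmp{G}$ is simultaneously a transitive orientation of $\ucc{G}$, so the orientation $\mathcal{O}$ already produced \emph{is} the desired transitive orientation of $\ucc{G}$.

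The step I expect to be the main obstacle is the complexity accounting rather than correctness. One must confirm that the whole procedure stays near-linear in the size of $G$ and never in the size of $\tcl{G}$ or $\ucc{G}$, both of which can be superlinearly larger when $G$ is sparse. This rests on two facts already available: that $\mathcal{O}$ is stored implicitly as a linear vertex ordering, so we never enumerate the edges of $\ucc{G}$, and that Theorem~\ref{Th:to-tcc} is purely combinatorial and merely reinterprets the same ordering at no additional cost. Given the established near-linear bound for complement-mode OVP on $\mathcal{U}(G)$, the corollary then follows.
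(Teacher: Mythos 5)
Your proposal is correct and follows essentially the same route as the paper: run the complement-mode (modified) OVP algorithm to obtain a transitive orientation of $\ucmp{G}$ in near-linear time, then apply Theorem~\ref{Th:to-tcc} to conclude that this orientation also transitively orients $\ucc{G}$. The additional remarks on complexity accounting (the orientation being represented as a vertex ordering, never enumerating the edges of $\ucc{G}$) are sound and merely make explicit what the paper leaves implicit.
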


\begin{proof}
As $\ucmp{G}$ is transitively orientable, we can find a transitive orientation of $\ucmp{G}$ in near-linear time using a modified OVP Algorithm.
By Theorem~\ref{Th:to-tcc} this gives us a transitive orientation of $\ucc{G}$.
\end{proof}

We must point out however that $\ucmp{G}$ may not be transitively orientable even though $\ucc{G}$ is.

\begin{example}\label{Ex:to-tcc}
\newcommand{\vdef}{%
\foreach \pos/\name in {%
{(0.5,1.5)/A},
{(-.3,0.95)/B}, {(1.3,0.95)/C},
{(0,0)/D}, {(1,0)/E}}
    \node[vertex] (\name) at \pos {$\name$}
}
Consider the graph $G$ shown below. Its complement $\ucmp{G}$ is shown on the right.
\begin{center}
\hfill
\begin{tikzpicture}[scale=2,swap]
\vdef;
\foreach \source/\dest in {A/D, A/E, B/C, B/E, D/C}
    \path[edge] (\source) -- (\dest);
\end{tikzpicture}
\hfill
\begin{tikzpicture}[scale=2,swap]
\vdef;
 complement edges
\foreach \source/\dest in {A/B, A/C, B/D, C/E, D/E}
    \path[uedge] (\source) -- (\dest);
\end{tikzpicture}
\hfill~
\end{center}
Observe that $\ucmp{G}$ contains an odd-length cycle of forced edges, and thus cannot be transitively oriented.
However, the transitive closure $\tcl{G}$ of $G$ does have a complement that is transitively orientable:
\begin{center}
\hfill
\begin{tikzpicture}[scale=2,swap]
\vdef;
\foreach \source/\dest in {A/D, A/E, B/C, B/E, D/C}
    \path[edge] (\source) -- (\dest);
\path[tedge] (A) -- (C);
\end{tikzpicture}
\hfill
\begin{tikzpicture}[scale=2,swap]
\vdef;
 complement edges
\foreach \source/\dest in {A/B, D/B, C/E, D/E}
    \path[edge] (\source) -- (\dest);
\end{tikzpicture}
\hfill~
\end{center}
\qed
\end{example}

\section{Conclusion}

We have presented several results related to finding the modular decomposition of a transitive graph $G$, and a transitive orientation of its complement, in particular when $G$ is given indirectly as the transitive closure of a sparse graph.
This is particularly interesting for digraphs with transitively orientable complement, for which we can identify, in near-linear time, a pair of total orders characterizing the transitive closure.
Among other applications, this allows answering of reachability queries in constant time.

\bibliography{graph.bib}{}

\begin{thebibliography}{10}

\bibitem{DBLP:journals/dmtcs/DahlhausGM02}
Elias Dahlhaus, Jens Gustedt, and Ross~M. McConnell.
\newblock Partially complemented representations of digraphs.
\newblock {\em Discrete Mathematics {\&} Theoretical Computer Science},
  5(1):147--168, 2002.

\bibitem{pos41}
Ben Dushnik and E.~W. Miller.
\newblock Partially ordered sets.
\newblock {\em American Journal of Mathematics}, 63(3):600--610, 1941.

\bibitem{DBLP:journals/jacm/EvenPL72}
Shimon Even, Amir Pnueli, and Abraham Lempel.
\newblock Permutation graphs and transitive graphs.
\newblock {\em J. {ACM}}, 19(3):400--410, 1972.

\bibitem{gallai67}
T.~Gallai.
\newblock Transitiv orientierbare graphen.
\newblock {\em Acta Mathematica Academiae Scientiarum Hungarica},
  18(1-2):25--66, 1967.

\bibitem{DBLP:conf/swat/HabibMP04}
Michel Habib, Fabien de~Montgolfier, and Christophe Paul.
\newblock A simple linear-time modular decomposition algorithm for graphs,
  using order extension.
\newblock In {\em {SWAT}}, pages 187--198, 2004.

\bibitem{DBLP:journals/csr/HabibP10}
Michel Habib and Christophe Paul.
\newblock A survey of the algorithmic aspects of modular decomposition.
\newblock {\em Computer Science Review}, 4(1):41--59, 2010.

\bibitem{DBLP:journals/dam/McConnellM05}
Ross~M. McConnell and Fabien de~Montgolfier.
\newblock Linear-time modular decomposition of directed graphs.
\newblock {\em Discrete Applied Mathematics}, 145(2):198--209, 2005.

\bibitem{DBLP:journals/dm/McConnellS99}
Ross~M. McConnell and Jeremy Spinrad.
\newblock Modular decomposition and transitive orientation.
\newblock {\em Discrete Mathematics}, 201(1-3):189--241, 1999.

\bibitem{DBLP:conf/soda/McConnellS94}
Ross~M. McConnell and Jeremy~P. Spinrad.
\newblock Linear-time modular decomposition and efficient transitive
  orientation of comparability graphs.
\newblock In {\em {SODA}}, pages 536--545, 1994.

\bibitem{DBLP:journals/dmtcs/McConnellS00}
Ross~M. McConnell and Jeremy~P. Spinrad.
\newblock Ordered vertex partitioning.
\newblock {\em Discrete Mathematics {\&} Theoretical Computer Science},
  4(1):45--60, 2000.

\bibitem{diamond-lemma}
M.~H.~A. Newman.
\newblock On theories with a combinatorial definition of ``equivalence''.
\newblock {\em Annals of Mathematics}, 43:223--243, 1942.

\bibitem{DBLP:conf/icalp/TedderCHP08}
Marc Tedder, Derek~G. Corneil, Michel Habib, and Christophe Paul.
\newblock Simpler linear-time modular decomposition via recursive factorizing
  permutations.
\newblock In {\em Automata, Languages and Programming ({ICALP})}, pages
  634--645, 2008.

\end{thebibliography}
\bibliographystyle{plain}

\end{document}